\newtheorem{theorem}{Theorem}
\newtheorem{cor}{Corollary}
\newtheorem{Lemma}{Lemma}
\begin{document}

\title{Entanglement robustness against particle loss in multiqubit systems}

\author{A. Neven}
\affiliation{Institut de Physique Nucl\'eaire, Atomique et de Spectroscopie, \\ CESAM, University of Li\`ege, B\^atiment~B15, Sart Tilman, Li\`ege 4000, Belgium}

\author{J. Martin}
\affiliation{Institut de Physique Nucl\'eaire, Atomique et de Spectroscopie, \\ CESAM, University of Li\`ege, B\^atiment~B15, Sart Tilman, Li\`ege 4000, Belgium}

\author{T. Bastin}
\affiliation{Institut de Physique Nucl\'eaire, Atomique et de Spectroscopie, \\ CESAM, University of Li\`ege, B\^atiment~B15, Sart Tilman, Li\`ege 4000, Belgium}

\date{January 7, 2019}

\begin{abstract}
When some of the parties of a multipartite entangled pure state are lost, the question arises whether the residual mixed state is also entangled, in which case the initial entangled pure state is said to be robust against particle loss. In this paper, we investigate this entanglement
robustness for $N$-qubit pure states. We identify exhaustively all entangled states that are fragile, i.e., not robust, with respect to the loss of any single qubit of the system. We also study the entanglement robustness properties of symmetric states and put these properties in the perspective of the classification of states with respect to stochastic local operations assisted with classic communication (SLOCC classification).
\end{abstract}

\maketitle

\section{Introduction}

Entanglement is a key feature of quantum mechanics, intimately linked with its non-local nature. The intensive research undertaken in the past decades to better characterize quantum entanglement revealed its theoretical complexity as well as its potential for innovative experimental protocols (see for example Ref.~\cite{NielsenChuang,HorodeckiReview} and references therein). The practical realization of theoretical protocols involving entangled states constitutes an interesting challenge as it requires to solve experimental issues such as decoherence. In this context, the robustness of entanglement was introduced by Vidal and Tarrach~\cite{VidalTarrach} as a measure to quantify how resilient an entangled state is in the presence of local noise (see, e.g., Ref.~\cite{Goy} for recent results).

Entanglement robustness can also be defined with respect to particle loss~\cite{DurVidalCirac}. In this context, an entangled multipartite state is said to be robust (or in contrast fragile) with respect to the loss of a given subset of its particles if the reduced state of the remaining system is entangled (respectively separable). The Greenberger-Horne-Zeilinger (GHZ) state of 3 qubits $|\mathrm{GHZ}_3\rangle \equiv (|000\rangle + |111\rangle)/\sqrt{2}$ is an emblematic example of fragile entangled state. Though highly entangled as a 3-qubit state, it loses all entanglement with the loss of any qubit~\cite{DurVidalCirac, Dur}. By contrast, the so-called W state of 3 qubits $|\mathrm{W}_3\rangle \equiv (|001\rangle + |010\rangle + |100\rangle)/\sqrt{3}$, which captures the other type of genuine 3-qubit entanglement~\cite{DurVidalCirac}, has the highest robustness of entanglement against particle loss among all 3-qubit states in the sense that the average entanglement of its 2-qubit reduced density operators reaches the highest possible value~\cite{DurVidalCirac}. This property of maximal bipartite entanglement (as measured by the mean concurrence of the 2-qubit reduced density operators) was generalized to $|\mathrm{W}\rangle$ states of an arbitrary number of qubits~\cite{Koashi}.

Entanglement robustness and fragility have been mainly studied in the context of the loss of all particles but two~\cite{Dur,Koashi,Stockton} where the concurrence~\cite{Wootters} allows for a direct detection of the residual entanglement of the resulting 2-qubit mixed state. More general cases in which a finite fraction of particles is lost has been considered in~\cite{Mishra}. In Ref.~\cite{Helwig}, the concept of absolutely maximally entangled states in qudit systems is introduced with states that exhibit a specific fragility property with respect to the loss of half or more of the system. In Ref.~\cite{Rajagopal}, the relationship between robustness against particle loss and permutation symmetry was studied for specific 3-qubit pure states. The robustness of entanglement was also studied with respect to local measurements~\cite{Briegel}. 

An exhaustive study of all multiqubit fragile states with respect to the loss of any number of qubits is still missing. In this paper such a study is provided for general $N$-qubit systems when the loss of any single of the qubits is considered. We further investigate the influence of permutation invariance on entanglement fragility. We show that symmetric entangled states that are fragile with respect to the loss of one qubit are all stochastically equivalent through local operations assisted with classical communication (SLOCC-equivalent~\cite{Bennett}) and belong to the SLOCC class of the GHZ state. The robustness of entanglement for all states that belong to the SLOCC classes of the entangled Dicke states~\cite{footnote1} is also investigated.

The paper is organized as follows. In Sec.~\ref{1qubit}, we recall the definition of entanglement robustness and fragility against particle loss and focus on the scenario in which a single qubit is lost. In Sec.~\ref{symSubspace}, we investigate the consequences of permutational invariance on entanglement fragility. We draw a conclusion in Sec.~\ref{conclusion}.

\section{Fragility with respect to the loss of a single qubit}
\label{1qubit}

In an $N$-particle system, an entangled state $|\psi\rangle$ is said to be fragile (robust) with respect to the loss of a given subset $\mathcal{S}$ of the particles if $\rho_{\neg \mathcal{S}}(\psi) \equiv \mathrm{Tr}_{\mathcal{S}}(|\psi\rangle\langle\psi|)$ is separable (entangled), where $\mathrm{Tr}_{\mathcal{S}}(|\psi\rangle\langle\psi|)$ is the partial trace of the density operator $|\psi\rangle\langle\psi|$ over the subset $\mathcal{S}$. In particular, an entangled state $|\psi\rangle$ is fragile (robust) with respect to the loss of the $k$th particle ($1 \leqslant k \leqslant N$) if
\begin{equation}
\rho_{\neg k}(\psi) \equiv \mathrm{Tr}_k(|\psi\rangle\langle\psi|)
\end{equation}
is separable (entangled), where $\mathrm{Tr}_k(|\psi\rangle\langle\psi|)$ is the partial trace of the density operator $|\psi\rangle\langle\psi|$ over the $k$th particle.
The fragility (robustness) of multiparticle entangled states is an invariant under local unitary (LU) operations. Indeed, for any subset $\mathcal{S}$ of the particles, any local unitaries acting on an entangled state $|\psi\rangle$ results in a similar LU action on the remaining particle reduced state $\rho_{\neg \mathcal{S}}(\psi)$ and the separability or entanglement of a mixed state is an LU-invariant~\cite{HorodeckiReview}.

The concept of entanglement robustness and fragility against particle loss makes only sense for $N$-partite systems of at least three parties, so that we always assume hereafter $N \geqslant 3$. In addition, throughout the paper, equality of state vectors is meant up to a global phase factor.

We first prove a simple lemma showing the incompatibility between the fragility of an entangled $N$-qubit state with respect to the loss of the $k$th qubit and biseparability for the bipartition $k|1\cdots \cancel{k} \cdots N$.
\begin{Lemma}
If $|\psi\rangle$ is a pure $N$-qubit entangled state that is fragile with respect to the loss of the $k$th qubit, then $\mathrm{rank}[\rho_{\neg k}(\psi)] = 2$. \label{lemmaRank}
\end{Lemma}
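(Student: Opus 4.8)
The plan is to exploit the Schmidt decomposition of $|\psi\rangle$ across the bipartition $k|1\cdots\cancel{k}\cdots N$ and to use the elementary fact that the rank of a reduced density operator equals the Schmidt rank across the corresponding cut. Since qubit $k$ carries only a two-dimensional Hilbert space, that Schmidt rank is either $1$ or $2$; hence the whole lemma reduces to excluding the value $1$, and this is precisely where the fragility hypothesis must be invoked.

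First I would write $|\psi\rangle = \sum_i \sqrt{\lambda_i}\,|i\rangle_k \otimes |\phi_i\rangle_{\neg k}$, where $\{|i\rangle_k\}$ is an orthonormal basis of qubit $k$, the $|\phi_i\rangle_{\neg k}$ are orthonormal states of the remaining $N-1$ qubits, and the nonnegative Schmidt coefficients $\lambda_i$ sum to one. Tracing out qubit $k$ then gives $\rho_{\neg k}(\psi) = \sum_i \lambda_i\, |\phi_i\rangle\langle\phi_i|$, whose rank is exactly the number of nonzero coefficients $\lambda_i$, i.e.\ the Schmidt rank. This rank is therefore at most $2$, so only the possibility $\mathrm{rank}[\rho_{\neg k}(\psi)] = 1$ remains to be ruled out.

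To exclude rank $1$, I would argue by contradiction using fragility. If the Schmidt rank were $1$, then $|\psi\rangle = |0\rangle_k \otimes |\phi_0\rangle_{\neg k}$ would be a product across the cut $k|\neg k$, and $\rho_{\neg k}(\psi) = |\phi_0\rangle\langle\phi_0|$ would be a \emph{pure} state. Because $|\psi\rangle$ is entangled by assumption, its factor $|\phi_0\rangle$ cannot be fully separable: otherwise $|\psi\rangle$ itself would be a full product state and hence not entangled. Thus $|\phi_0\rangle$ would be an entangled $(N-1)$-qubit state, and a pure state is separable precisely when it is a full product, so $\rho_{\neg k}(\psi)$ would not be separable. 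This contradicts fragility, which demands that $\rho_{\neg k}(\psi)$ be separable. Consequently the Schmidt rank equals $2$ and $\mathrm{rank}[\rho_{\neg k}(\psi)] = 2$.

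The only genuinely nontrivial step will be this last implication, namely transferring the global entanglement of $|\psi\rangle$ onto the reduced pure state $|\phi_0\rangle$ in the rank-one scenario; everything preceding it is the routine correspondence between Schmidt rank and reduced-state rank for a single-qubit cut.
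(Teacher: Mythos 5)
Your proposal is correct and follows essentially the same route as the paper: the Schmidt decomposition across the cut $k|1\cdots\cancel{k}\cdots N$ bounds the rank by $2$, and rank $1$ is excluded because it would make $\rho_{\neg k}(\psi)$ a pure state, for which separability is equivalent to being a full product, in contradiction with the joint hypotheses of entanglement and fragility. The only (immaterial) difference is the direction of the contradiction — you use entanglement of $|\psi\rangle$ to conclude $\rho_{\neg k}(\psi)$ is entangled and contradict fragility, whereas the paper uses fragility to conclude $|\psi\rangle$ is fully separable and contradict entanglement; these are the same argument read contrapositively.
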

\begin{proof}
As a consequence of the Schmidt decomposition~\cite{NielsenChuang} of the pure state $|\psi\rangle$ for the bipartition $k|1\cdots \cancel{k} \cdots N$, the reduced density operator $\rho_{\neg k}(\psi)$ has a rank that is at most equal to 2. A reduced density operator of rank 1 corresponds to the case where $|\psi\rangle$ is biseparable for the considered bipartition~\cite{HorodeckiReview}. In this case, the reduced density operator is a pure state, that is separable because of the fragility hypothesis of the state $|\psi\rangle$. This leads however to a fully separable state $|\psi\rangle$, which contradicts our hypothesis about the entanglement of $|\psi\rangle$ and concludes the proof of the lemma.
\end{proof}

We can now state our main result.
\begin{theorem}
An entangled $N$-qubit state $|\psi\rangle$ is fragile with respect to the loss of any qubit that is part of a given nonempty subset $\mathcal{A}$ of the qubits if and only if it can be written in the form
\begin{equation}
|\psi\rangle = \sqrt{p} \, |e_1,\dots,e_N\rangle + \sqrt{1-p} \, |e_1^\prime,\dots,e_N^\prime\rangle,
\label{canonicForm}
\end{equation}
where $0 < p < 1$, $|e_i\rangle, \; |e_i^\prime\rangle$ ($i=1,\dots,N$) are normalized single-qubit states with $|e_i^\prime\rangle \perp |e_i\rangle, \forall i \in \mathcal{A}$ and, if $\#\mathcal{A}=1$, $|e_j^\prime\rangle \neq |e_j \rangle$ for at least one qubit $j \notin \mathcal{A}$.
\label{mainThm}
\end{theorem}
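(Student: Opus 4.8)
The plan is to prove the two implications separately, disposing of the converse first since it is the easy half. Given a state of the form \eqref{canonicForm} with the stated hypotheses, fragility with respect to any $k\in\mathcal{A}$ is immediate: because $|e_k'\rangle\perp|e_k\rangle$, tracing out qubit $k$ annihilates the cross terms and leaves
\[
\rho_{\neg k}(\psi)=p\,\Bigl(\bigotimes_{i\ne k}|e_i\rangle\Bigr)\Bigl(\bigotimes_{i\ne k}\langle e_i|\Bigr)+(1-p)\,\Bigl(\bigotimes_{i\ne k}|e_i'\rangle\Bigr)\Bigl(\bigotimes_{i\ne k}\langle e_i'|\Bigr),
\]
a convex combination of two fully product states of the remaining $N-1$ qubits, hence fully separable. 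To see that $|\psi\rangle$ is itself entangled I would exhibit a bipartition of Schmidt rank $2$: choosing a qubit $a$ with $|e_a\rangle\perp|e_a'\rangle$ (which exists since $\mathcal{A}\neq\varnothing$) and observing that the complementary product vectors are linearly independent because they differ on a second qubit — guaranteed when $\#\mathcal{A}\ge2$ by a second orthogonal factor, and when $\#\mathcal{A}=1$ by the extra hypothesis $|e_j'\rangle\neq|e_j\rangle$ — the cut $a|\mathrm{rest}$ has Schmidt rank $2$, so $|\psi\rangle$ is not fully separable.

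For the forward implication I would fix $k\in\mathcal{A}$ and use Lemma \ref{lemmaRank} to get $\mathrm{rank}[\rho_{\neg k}(\psi)]=2$, so that $\rho_{\neg k}$ is a rank-$2$ fully separable state. The heart of the argument is then a structural fact: \emph{any rank-$2$ fully separable state of the $N-1$ qubits is a mixture of exactly two pure product states}. I would prove this by noting that every product vector in a separable decomposition lies in the two-dimensional range $V$ of $\rho_{\neg k}$; parametrising $V$ as $|\phi_1\rangle+t\,|\phi_2\rangle$, the condition that such a vector be a product across a fixed qubit cut is the vanishing of the $2\times2$ minors of an associated $2\times2^{N-2}$ matrix, each of which is a polynomial of degree at most two in $t$. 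Hence either $V$ contains at most two product vectors — which must then be the ones realising the decomposition — or it contains three or more, forcing each minor to vanish identically so that \emph{all} of $V$ consists of product vectors and the spectral decomposition of $\rho_{\neg k}$ already supplies two. Either way $\rho_{\neg k}=q_1|\pi_1\rangle\langle\pi_1|+q_2|\pi_2\rangle\langle\pi_2|$ with $|\pi_1\rangle,|\pi_2\rangle$ product and linearly independent. Since $|\psi\rangle$ and $\sqrt{q_1}\,|0\rangle|\pi_1\rangle+\sqrt{q_2}\,|1\rangle|\pi_2\rangle$ are two purifications of $\rho_{\neg k}$ with an ancilla (qubit $k$) whose dimension equals the rank, they are related by a unitary on qubit $k$, which yields $|\psi\rangle=\sqrt{q_1}\,|e_k\rangle\otimes|\pi_1\rangle+\sqrt{q_2}\,|e_k'\rangle\otimes|\pi_2\rangle$ with $|e_k\rangle\perp|e_k'\rangle$: the form \eqref{canonicForm}, orthogonal at the single qubit $k$.

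It remains to secure orthogonality simultaneously on \emph{all} of $\mathcal{A}$. Writing the representation just obtained as $|\psi\rangle=\sqrt{q}\bigotimes_i|e_i\rangle+\sqrt{1-q}\bigotimes_i|e_i'\rangle$ and letting $D=\{i:|e_i'\rangle\neq|e_i\rangle\}$ be the set of qubits on which the two product states differ, Lemma \ref{lemmaRank} forces $\mathcal{A}\subseteq D$ (a qubit outside $D$ factors out, making $\rho_{\neg k}$ pure and contradicting rank $2$), while entanglement forces $\#D\ge2$. If $\#\mathcal{A}=1$ this already gives the claim, since any $j\in D$ with $j\neq k$ furnishes the required $|e_j'\rangle\neq|e_j\rangle$ with $j\notin\mathcal{A}$. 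If $\#\mathcal{A}\ge2$ I would split on $\#D$: when $\#D\ge3$, for every $k\in\mathcal{A}$ the two product vectors spanning the range of $\rho_{\neg k}$ differ on at least two qubits, hence are the \emph{only} product vectors there, and comparing them with the actual cross term of $\rho_{\neg k}$ forces $\langle e_k|e_k'\rangle=0$ in this same decomposition; when $\#D=2$ one necessarily has $\mathcal{A}=D$, the state reduces to a genuinely entangled two-qubit state on these qubits tensored with a product spectator, and replacing it by its Schmidt decomposition produces orthogonality on both qubits at once.

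The main obstacle is the structural classification of rank-$2$ separable states via the minor/degree argument, together with the bookkeeping needed to merge the per-qubit decompositions into a single one orthogonal on all of $\mathcal{A}$. The delicate point is the two-qubit case $\#D=2$: there a naive choice of decomposition can fail to be orthogonal (one can superpose two product states that agree up to a non-orthogonal single-qubit rotation), and a genuine Schmidt re-decomposition is required to recover the canonical form.
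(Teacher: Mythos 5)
Your proof is correct, and it shares the paper's overall skeleton --- sufficiency by direct computation, necessity via Lemma~\ref{lemmaRank}, a decomposition of the rank-2 separable reduced state into two pure product states, purification freedom to lift that decomposition to $|\psi\rangle$, and a case split to propagate orthogonality to the rest of $\mathcal{A}$ --- but it replaces both of the paper's imported ingredients with self-contained arguments, which is a genuine difference. First, the structural fact that a rank-2 fully separable state is a mixture of exactly two pure product states is simply cited by the paper (Chen and {\DJ}okovi\'{c}~\cite{Chen2013}), whereas you prove it with the degree-two-minor argument on the two-dimensional range; your purification step is then the same unitary-freedom theorem~\cite{Hughston} the paper invokes, in different clothing. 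Second, to force $\langle e_k|e_k^\prime\rangle=0$ at the remaining qubits of $\mathcal{A}$, the paper uses an entanglement criterion of Boyer, Liss, and Mor~\cite{Boyer} (a nonvanishing cross term makes the reduced state entangled), while you argue by uniqueness of the separable decomposition: when the two product vectors differ on at least two qubits they are the only product rays in the range, so linear independence of the four rank-one operators kills the cross term. Your dichotomy $\#D=2$ versus $\#D\geqslant 3$ is the global version of the paper's per-pair cases (a)/(b), and it buys you the ability to fix a single decomposition and handle all of $\mathcal{A}$ at once instead of iterating over $l$ and then $l^\prime$; your $\#D=2$ branch (Schmidt re-decomposition of the entangled two-qubit factor) coincides with the paper's case (a).

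One step to tighten, though it is a repair rather than a conceptual gap: in the minor argument, parametrize the range projectively, as $s|\phi_1\rangle + t|\phi_2\rangle$ with the $2\times 2$ minors homogeneous of degree two in $(s,t)$, rather than affinely in $t$. With the affine parametrization the ray $|\phi_2\rangle$ (your $t=\infty$) escapes the root count: three product rays, one of which is $|\phi_2\rangle$, would supply only two finite roots, which does not force a quadratic to vanish identically, and a three-term product decomposition cannot be lifted to the canonical form with a single qubit-$k$ ancilla. In homogeneous coordinates a nonzero degree-two form has at most two projective zeros, so the dichotomy ``at most two product rays, or the whole range consists of product vectors'' holds exactly as you use it.
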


\begin{proof}
An entangled state $|\psi\rangle$ fragile with respect to the loss of any particle belonging to a given subset $\mathcal{A}$ of the particles is a state such that $\rho_{\neg k}(\psi)$ is separable, $\forall k \in \mathcal{A}$. We first show that the expressed condition is sufficient: for a given nonempty subset $\mathcal{A}$ of an $N$-qubit system, any state of the form (\ref{canonicForm}) is entangled and fragile with respect to the loss of any qubit that belongs to $\mathcal{A}$. Indeed, for such states and any $k \in \mathcal{A}$, $\langle e_k^\prime|e_k\rangle = 0$ and $\rho_{\neg k}(\psi) = p |\mathbf{e}_{\neg k}\rangle \langle \mathbf{e}_{\neg k} | + (1 - p) |\mathbf{e}_{\neg k}^\prime\rangle \langle \mathbf{e}_{\neg k}^\prime |$, with $|\mathbf{e}_{\neg k}\rangle = \otimes_{i \neq k} |e_i\rangle$ and $|\mathbf{e}'_{\neg k}\rangle = \otimes_{i \neq k} |e'_i\rangle$, hence $\rho_{\neg k}(\psi)$ is separable. In addition, $\rho_{\neg k}(\psi)$ is a convex sum of two projectors onto linearly independent vectors and is thus of rank 2. This ensures that the state $|\psi\rangle$ is itself entangled~\cite{HorodeckiReview}.

We now show that the condition is necessary. Let $|\psi\rangle$ be an entangled $N$-qubit state that is fragile with respect to the loss of any qubit belonging to a given subset $\mathcal{A}$ that is supposed to contain at least one qubit, say the $k$th. According to Lemma~\ref{lemmaRank}, the separable reduced density operator $\rho_{\neg k}(\psi)$ is of rank 2. It can thus be written as a convex sum of projectors onto two distinct $(N-1)$-qubit product states $|\mathbf{e}_{\neg k}\rangle \equiv \otimes_{i \neq k} |e_i\rangle$ and $|\mathbf{e}'_{\neg k}\rangle \equiv \otimes_{i \neq k} |e'_i\rangle$ (with $|e_i\rangle$ and $|e^\prime_i\rangle \; (i \neq k)$ normalized states of the $i$th qubit not all pairwise identical)~\cite{Chen2013}:
\begin{equation}
\rho_{\neg k}(\psi) = p \: |\mathbf{e}_{\neg k}\rangle \langle \mathbf{e}_{\neg k} | + (1 - p) \: |\mathbf{e}_{\neg k}^\prime\rangle \langle \mathbf{e}_{\neg k}^\prime |, \label{partialTrace1}
\end{equation}
with $0 < p < 1$. In addition, the spectral decomposition of $\rho_{\neg k}(\psi)$ is only composed of two terms and we have
\begin{equation}
\rho_{\neg k}(\psi) = \lambda_1 \: |v_1\rangle \langle v_1 | + \lambda_2 \: |v_2\rangle \langle v_2 |, \label{eigendecomposition}
\end{equation}
with $\lambda_1$ and $\lambda_2$ the two nonzero eigenvalues of $\rho_{\neg k}(\psi)$ with associated eigenvectors $|v_1\rangle$ and $|v_2\rangle$, respectively. A $2 \times 2$ unitary $U$ can always be found to connect both decompositions (\ref{partialTrace1}) and (\ref{eigendecomposition}) such that~\cite{Hughston}
\begin{equation}
\left( \begin{array}{c}
\sqrt{\lambda_1} \: |v_1\rangle \\
\sqrt{\lambda_2} \: |v_2\rangle
\end{array} \right) = U \left( \begin{array}{c}
\sqrt{p} \: |\mathbf{e}_{\neg k}\rangle \\
\sqrt{1-p} \: |\mathbf{e}_{\neg k}^\prime\rangle
\end{array} \right).
\label{unitary}
\end{equation}
Considering the Schmidt decomposition~\cite{NielsenChuang} of the state $|\psi\rangle$ for the bipartition $k|1\cdots \cancel{k} \cdots N$, two orthonormal single qubit states $|a_1\rangle,|a_2\rangle$ can be found such that
\begin{equation}
|\psi\rangle = \sqrt{\lambda_1} \; |v_1\rangle \otimes |a_1\rangle_k + \sqrt{\lambda_2} \; |v_2\rangle \otimes |a_2\rangle_k,
\end{equation}
where the subscript $k$ explicitly specifies the single qubit the states $|a_1\rangle$ and $|a_2\rangle$ refer to. It then follows from Eq.~\eqref{unitary} that
\begin{equation}
|\psi\rangle = \sqrt{p} \: |\mathbf{e}_{\neg k}\rangle \otimes |e_k\rangle \; + \; \sqrt{1-p} \: |\mathbf{e}_{\neg k}^\prime\rangle \otimes |e_k^\prime\rangle
\label{psifromTr1},
\end{equation}
where
\begin{equation}
\left( \begin{array}{c}
|e_k\rangle \\
|e_k^\prime\rangle
\end{array} \right) = U^T \left( \begin{array}{c}
|a_1\rangle \\
|a_2\rangle
\end{array} \right).
\label{UnitaryT}
\end{equation}
Since $|a_1\rangle$ and $|a_2\rangle$ are orthonormal, so are $|e_k\rangle$ and $|e'_k\rangle$. In the case where the subset $\mathcal{A}$ only contains the qubit $k$, Eq.~(\ref{psifromTr1}) is nothing but Eq.~(\ref{canonicForm}) where $|e'_k\rangle \perp |e_k\rangle$ and at least one state $|e'_i\rangle$ is not equal to $|e_i\rangle$ for all $i\neq k$. This ends the proof of the theorem in this case.

If additional qubits belong to the subset $\mathcal{A}$, let for instance the $l$th ($l \neq k$) be part of it. This implies that the reduced density operator $\rho_{\neg l}(\psi)$ is separable. Considering Eq.~(\ref{psifromTr1}), we explicitly have
\begin{align}
\rho_{\neg l}(\psi) & = p |\mathbf{e}_{\neg l}\rangle \langle\mathbf{e}_{\neg l}| + (1-p) |\mathbf{e}_{\neg l}^\prime\rangle \langle\mathbf{e}_{\neg l}^\prime| \nonumber \\
& + \sqrt{p(1-p)} (\langle e_l | e_l^\prime\rangle |\mathbf{e}_{\neg l}^\prime\rangle\langle\mathbf{e}_{\neg l}| + \mathrm{H.c.}),
\label{rhonegl}
\end{align}
with $|\mathbf{e}_{\neg l}\rangle = \otimes_{i\neq l} |e_i\rangle$, $|\mathbf{e}_{\neg l}^\prime\rangle = \otimes_{i\neq l} |e_i^\prime\rangle$ and where $\mathrm{H.c.}$ stands for Hermitian conjugate.
In addition, since for at least one index $i \neq k$, $|e_i^\prime\rangle \neq |e_i\rangle$, either (a) $|\mathbf{e}_{\neg k,l}^\prime\rangle = |\mathbf{e}_{\neg k,l}\rangle$ or (b) $|\mathbf{e}_{\neg k,l}^\prime\rangle \neq |\mathbf{e}_{\neg k,l}\rangle$, where $|\mathbf{e}_{\neg k,l}\rangle \equiv \otimes_{i\neq k,l}|e_i\rangle$ and $|\mathbf{e}_{\neg k,l}^\prime\rangle \equiv \otimes_{i\neq k,l}|e_i^\prime\rangle$. In case (a), $|\psi\rangle = |\mathbf{e}_{\neg k,l}\rangle \otimes |u_{k,l}\rangle$, with $|u_{k,l}\rangle = \sqrt{p} |e_k,e_l\rangle + \sqrt{1-p}|e_k^\prime,e'_l\rangle$, and $\rho_{\neg l}(\psi)$ simplifies to $|\mathbf{e}_{\neg k,l}\rangle \langle\mathbf{e}_{\neg k,l}| \otimes \rho_k$ with $\rho_k$ the $k$th qubit state $\rho_k = p |e_k\rangle\langle e_k| + (1-p)|e'_k\rangle\langle e'_k| + \sqrt{p(1-p)}(\langle e_l|e'_l\rangle |e'_k\rangle\langle e_k| +\mathrm{H.c.})$.
The reduced density matrix $\rho_{\neg l}(\psi)$ is separable whatever the $l$th qubit states $|e_l\rangle$ and $|e'_l\rangle$. However, since by hypothesis $|\psi\rangle$ is entangled, so is the state $|u_{k,l}\rangle$. Hence, its Schmidt rank is 2 and it admits a Schmidt decomposition $|u_{k,l}\rangle = \sqrt{p'} |f_k,f_l\rangle + \sqrt{1-p'}|f_k^\perp,f_l^\perp\rangle$, with $0 < p' < 1$ and ($|f_k\rangle$, $|f_k^\perp \rangle$), ($|f_l\rangle$, $|f_l^\perp \rangle$) 2 pairs of orthonormal states~\cite{NielsenChuang}. In this form, the state $|\psi\rangle$ is a linear superposition of two product states with orthonormal states for qubits $k$ and $l$.
In case (b), there exists a state $|\mathbf{e}_{\neg k,l}^\perp\rangle \perp |\mathbf{e}_{\neg k,l}\rangle$ that has a nonzero overlap with $|\mathbf{e}_{\neg k,l}^\prime\rangle$. In this case, the separability of $\rho_{\neg l}(\psi)$ is equivalent to the condition $|e'_l\rangle \perp |e_l\rangle$. Indeed, on the one hand, if $|e'_l\rangle \perp |e_l\rangle$, $\rho_{\neg l}(\psi)$ is trivially separable [see Eq.~(\ref{rhonegl})]. On the other hand, if $|e'_l\rangle$ is not orthogonal to $|e_l\rangle$, $\langle \mathbf{e}_{\neg k,l}^\perp, e'_k|\rho_{\neg l}(\psi)| \mathbf{e}_{\neg k,l}, e_k\rangle \neq 0$ and since $\langle \mathbf{e}_{\neg k,l}^\perp, e_k|\rho_{\neg l}(\psi)| \mathbf{e}_{\neg k,l}^\perp, e_k\rangle = 0$ this implies that $\rho_{\neg l}(\psi)$ is entangled (Lemma I of Ref.~\cite{Boyer}). Whatever the case (a) or (b), all this shows that if the state $|\psi\rangle$ is fragile with respect to the loss of qubits $k$ or $l$, it can be written in the form of Eq.~(\ref{canonicForm}) with $|e'_i\rangle \perp |e_i\rangle$ for $i = k,l$. If the subset $\mathcal{A}$ only contains these qubits, this ends the proof of the theorem.

For any additional qubit $l' \neq k, l$ that belongs to $\mathcal{A}$, $|\mathbf{e}_{\neg k,l'}^\prime\rangle$ necessarily differs from $|\mathbf{e}_{\neg k,l'}\rangle$ since $|e'_l\rangle \perp |e_l\rangle$ and the separability of $\rho_{\neg l'}(\psi)$ is equivalent to $|e'_{l'}\rangle \perp |e_{l'}\rangle$ since the case falls within case (b) here above that holds whatever $l \neq k$. This ends the proof of the theorem in the most general case.
\end{proof}

A straightforward corollary of Theorem~\ref{mainThm} can be stated as follows.
\begin{cor}
An entangled $N$-qubit state $|\psi\rangle$ is fragile with respect to the loss of any qubit if and only if it can be written in the form
\begin{equation}
|\psi\rangle = a |e_1,\dots,e_N\rangle + b |e_1^\perp,\dots,e_N^\perp\rangle, \label{RegPol}
\end{equation}
where $(|e_i\rangle,|e_i^\perp\rangle)$ $(i=1,\dots,N)$ are pairs of orthonormal single-qubit states and $a,b$ are two nonzero complex numbers with $|a|^2 + |b|^2 = 1$. The phases of $|e_1\rangle$ and $|e_1^\perp\rangle$ can always be chosen to have both $a$ and $b$ real and positive. \label{corollaryNqubits}
\end{cor}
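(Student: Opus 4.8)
The plan is to recognize the corollary as the special case of Theorem~\ref{mainThm} in which the subset $\mathcal{A}$ consists of \emph{all} the qubits, i.e.\ $\mathcal{A} = \{1,\dots,N\}$. Being fragile with respect to the loss of \emph{any} qubit is precisely the statement that $\rho_{\neg k}(\psi)$ is separable for every $k \in \{1,\dots,N\}$, which is exactly the hypothesis of the theorem for this maximal choice of $\mathcal{A}$. Since we assume $N \geqslant 3$, we have $\#\mathcal{A} = N \geqslant 3 > 1$, so the caveat of the theorem concerning the case $\#\mathcal{A}=1$ (the requirement that $|e_j'\rangle \neq |e_j\rangle$ for some $j \notin \mathcal{A}$) is vacuous here, as there is no qubit outside $\mathcal{A}$.

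For the necessity direction I would invoke the theorem directly: if $|\psi\rangle$ is fragile with respect to the loss of any qubit, it can be written in the form~(\ref{canonicForm}) with $|e_i'\rangle \perp |e_i\rangle$ now holding for \emph{every} $i \in \{1,\dots,N\}$. Renaming $|e_i'\rangle \equiv |e_i^\perp\rangle$ exhibits the $N$ pairs of orthonormal single-qubit states, and setting $a = \sqrt{p}$, $b = \sqrt{1-p}$ with $0 < p < 1$ yields two nonzero real coefficients satisfying $|a|^2 + |b|^2 = 1$, which is precisely the form~(\ref{RegPol}).

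For the sufficiency direction I would take a state of the form~(\ref{RegPol}) with possibly complex $a,b$ and reduce it to the real case already covered by the theorem. Writing $a = |a|\,e^{i\phi_a}$ and $b = |b|\,e^{i\phi_b}$, I absorb the two phases into the first qubit by replacing $|e_1\rangle \to e^{i\phi_a}|e_1\rangle$ and $|e_1^\perp\rangle \to e^{i\phi_b}|e_1^\perp\rangle$; this leaves the pair orthonormal and brings $|\psi\rangle$ to the form~(\ref{canonicForm}) with $p = |a|^2 \in (0,1)$ and $|e_i'\rangle \perp |e_i\rangle$ for all $i$. The sufficiency part of Theorem~\ref{mainThm}, applied with $\mathcal{A} = \{1,\dots,N\}$, then guarantees that $|\psi\rangle$ is entangled and fragile with respect to the loss of any qubit. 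This same phase-absorption argument simultaneously establishes the final sentence of the corollary, since it shows that the coefficients can always be taken real and positive. I do not anticipate any genuine obstacle here: the content of the corollary is entirely contained in Theorem~\ref{mainThm}, and the only minor point to dispatch is the passage between the complex coefficients allowed in~(\ref{RegPol}) and the real, positive coefficients $\sqrt{p},\sqrt{1-p}$ appearing in~(\ref{canonicForm}), which is settled by the one-line phase redefinition above.
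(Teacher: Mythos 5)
Your proposal is correct and matches the paper's intent exactly: the paper presents this corollary as a ``straightforward'' consequence of Theorem~\ref{mainThm} with no written proof, and your argument---specializing to $\mathcal{A}=\{1,\dots,N\}$, noting the $\#\mathcal{A}=1$ caveat is vacuous, and handling the complex-versus-real coefficients by absorbing phases into $|e_1\rangle$ and $|e_1^\perp\rangle$---is precisely the routine verification the authors left implicit.
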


An emblematic example of state that is fragile with respect to the loss of any qubit is given by the state $|\mathrm{GHZ}_N\rangle \equiv ( |0,\ldots,0\rangle + |1,\ldots,1\rangle )/\sqrt{2}$~\cite{DurVidalCirac}. Actually, all $N$-qubit states that are fragile with respect to the loss of any qubit are SLOCC-equivalent to this $|\mathrm{GHZ}_N\rangle$ state and hence belong to the same SLOCC class. Indeed, such states $|\psi\rangle$ being of the form (\ref{RegPol}) with $(|e_i\rangle,|e_i^\perp\rangle)$ pairs of orthonormal states $\forall i$, there necessarily exist unitary matrices $U_i$ with $U_i |e_i\rangle = |0\rangle$ and $ U_i |e_i^\perp\rangle = |1\rangle, \forall i$. This implies
\begin{equation}
|\mathrm{GHZ}_N\rangle = \left( \bigotimes_{i=1}^{N} D U_i \right) |\psi\rangle,
\label{ILO}
\end{equation}
with $D = \mathrm{diag}[(\sqrt{2} a)^{-1/N},(\sqrt{2} b)^{-1/N}]$. Since $a$ and $b$ are both nonzero, $D$ is invertible and the states $|\psi\rangle$ and $|\mathrm{GHZ}_N\rangle$ are connected through an invertible local operation (ILO), hence they are SLOCC-equivalent~\cite{DurVidalCirac}. This does not imply that the SLOCC class of the GHZ state would only be composed of fragile states because the ILOs of Eq.~(\ref{ILO}) do only form a subset of all possible ILOs that are defined by \emph{arbitrary} invertible matrices. We can thus state the following theorem.
\begin{theorem}
The entangled $N$-qubit states that are fragile with respect to the loss of any qubit all belong to the same SLOCC class, namely that of the $|\mathrm{GHZ}_N\rangle$ state (which still also contains robust states). \label{SLOCCequivalenceCor}
\end{theorem}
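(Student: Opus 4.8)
The statement splits into two independent claims, and the plan is to treat them in turn: first that every state fragile with respect to the loss of \emph{any} qubit lies in the $|\mathrm{GHZ}_N\rangle$ class, and second that this class nonetheless contains robust states.

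For the first claim I would start from Corollary~\ref{corollaryNqubits}, which guarantees that any such state is of the form~(\ref{RegPol}), $|\psi\rangle = a\,|e_1,\dots,e_N\rangle + b\,|e_1^\perp,\dots,e_N^\perp\rangle$ with each $(|e_i\rangle,|e_i^\perp\rangle)$ an orthonormal pair and $a,b$ both nonzero. Because $(|e_i\rangle,|e_i^\perp\rangle)$ and $(|0\rangle,|1\rangle)$ are orthonormal bases of $\mathbb{C}^2$, for each $i$ there is a single-qubit unitary $U_i$ sending $|e_i\rangle \mapsto |0\rangle$ and $|e_i^\perp\rangle \mapsto |1\rangle$, so that $\bigotimes_i U_i$ maps $|\psi\rangle$ to $a\,|0,\dots,0\rangle + b\,|1,\dots,1\rangle$. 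I would then apply the local diagonal operator $D=\mathrm{diag}[(\sqrt{2}a)^{-1/N},(\sqrt{2}b)^{-1/N}]$ on each qubit, which rescales both amplitudes to $1/\sqrt{2}$ and yields exactly $|\mathrm{GHZ}_N\rangle$, reproducing Eq.~(\ref{ILO}). Since $a,b\neq 0$ the operator $D$, and hence each $DU_i$, is invertible, so $|\psi\rangle$ is connected to $|\mathrm{GHZ}_N\rangle$ by an invertible local operation and is therefore SLOCC-equivalent to it~\cite{DurVidalCirac}. As SLOCC-equivalence is an equivalence relation, all such states belong to the single class of $|\mathrm{GHZ}_N\rangle$.

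The second claim is the subtler one. The key point is that the invertible local operations produced above are all of the restricted form $\bigotimes_i DU_i$ with $D$ diagonal in the computational basis, a proper subset of all invertible local operations, so SLOCC-equivalence to $|\mathrm{GHZ}_N\rangle$ does not force a state to be fragile. To make this concrete I would exhibit a robust representative: acting on the first qubit of $|\mathrm{GHZ}_N\rangle$ with the invertible non-unitary operator $M$ defined by $M|0\rangle=|0\rangle+|1\rangle$, $M|1\rangle=|1\rangle$ gives, up to normalization, $|0,\dots,0\rangle + |1,0,\dots,0\rangle + |1,\dots,1\rangle$, which is in the GHZ class by construction. Tracing out the first qubit leaves a state supported on the two-dimensional span of $|0,\dots,0\rangle$ and $|1,\dots,1\rangle$ of the remaining qubits, with a surviving coherence between them; the cleanest way to certify that this reduced operator is entangled is to note that its partial transpose with respect to any one remaining qubit acquires a negative eigenvalue from that coherence, so the reduced state is NPT and the original state is robust with respect to the loss of that qubit. (Equivalently, by Corollary~\ref{corollaryNqubits} this state cannot be fragile with respect to every qubit since it is not of the form~(\ref{RegPol}).)

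I expect the first claim to be essentially routine once Corollary~\ref{corollaryNqubits} is available, the only thing to watch being the invertibility of $D$, which hinges on $a,b\neq 0$. The step demanding real care will be the second: one must both confirm that the exhibited state lies in the GHZ class, which is immediate as an invertible-local-operation image of $|\mathrm{GHZ}_N\rangle$, and verify its genuine robustness, for which I would rely on the direct partial-transpose computation rather than on the canonical-form criterion alone.
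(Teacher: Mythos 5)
Your proof of the first claim is exactly the paper's argument: starting from Corollary~\ref{corollaryNqubits}, mapping each orthonormal pair $(|e_i\rangle,|e_i^\perp\rangle)$ to $(|0\rangle,|1\rangle)$ with local unitaries $U_i$, and absorbing the amplitudes $a,b$ with the diagonal invertible operator $D$, reproducing Eq.~(\ref{ILO}) verbatim; nothing to add there. Where you genuinely diverge is on the parenthetical claim that the GHZ class still contains robust states. The paper does not exhibit such a state: it only remarks that the ILOs of Eq.~(\ref{ILO}) form a proper subset of all ILOs, which explains why its construction does not force every GHZ-class state to be fragile, but does not by itself establish that robust states exist in the class. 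Your construction supplies that missing piece constructively: the state proportional to $|0,\dots,0\rangle + |1,0,\dots,0\rangle + |1,\dots,1\rangle$ is in the GHZ class by definition of SLOCC equivalence (it is the image of $|\mathrm{GHZ}_N\rangle$ under the invertible operation $M \otimes I \otimes \cdots \otimes I$), and tracing out the first qubit gives
\begin{equation*}
\tfrac{1}{3}\bigl( 2 |0,\dots,0\rangle\langle 0,\dots,0| + |0,\dots,0\rangle\langle 1,\dots,1| + \mathrm{H.c.} + |1,\dots,1\rangle\langle 1,\dots,1| \bigr),
\end{equation*}
whose partial transpose with respect to any one remaining qubit contains the $2\times 2$ block $\bigl(\begin{smallmatrix} 0 & 1/3 \\ 1/3 & 0 \end{smallmatrix}\bigr)$ on a pair of computational basis states carrying zero diagonal weight, hence the eigenvalue $-1/3$; the reduced state is NPT, therefore entangled, and the state is robust against loss of that qubit. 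This makes your version of the parenthetical an actual proof rather than a plausibility remark, which is a real improvement. One caution: your alternative aside --- that the example ``is not of the form~(\ref{RegPol})'' and so cannot be fragile by Corollary~\ref{corollaryNqubits} --- should not be presented as an independent argument, since the only clean way to rule out \emph{every} decomposition of the form~(\ref{RegPol}) is precisely the robustness you establish by the partial-transpose computation; as a standalone claim it would be circular or would need separate work. Since you explicitly rely on the NPT computation as the primary route, the proof stands.
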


\section{Robustness against particle loss in the symmetric subspace}
\label{symSubspace}

A symmetric $N$-qubit state is a state $|\psi_S\rangle$ that is invariant under any permutation of the qubits: $ \hat{P}_\pi |\psi_S\rangle = |\psi_S\rangle \; \forall \; \pi \in S_N$, where $S_N$ is the permutation group of $N$ elements and $\hat{P}_\pi$ is the permutation operator corresponding to the permutation $\pi$. Because of this invariance, the reduced density operators of symmetric multiqubit states are all equal when $t < N$ qubits are traced out whatever these $t$ qubits. This implies in particular that fragility or robustness with respect to the loss of $t$ given qubits holds whatever the $t$ qubits selected. In view of this observation, we first particularize the results of the previous section about the fragility of multiqubit states with respect to the loss of one qubit to the special case of symmetric states. We then investigate the fragility properties of symmetric states with respect to the loss of more than one qubit. Finally, the entanglement robustness of the states SLOCC-equivalent to the entangled symmetric Dicke states is studied.

\subsection{Fragility with respect to the loss of a single qubit}

Symmetric states that are fragile with respect to the loss of any single qubit satisfy Corollary~\ref{corollaryNqubits}. In the symmetric subspace, this corollary can be refined as follows.
\begin{cor}
A symmetric entangled $N$-qubit state $|\psi_S\rangle$ is fragile with respect to the loss of any qubit if and only if it can be written in the form
\begin{equation}
|\psi_S\rangle = a |e,\dots,e\rangle + b |e^\perp,\dots,e^\perp\rangle ,
\label{RegPolSym}
\end{equation}
where $|e\rangle$ and $|e^\perp\rangle$ are two orthonormal single-qubit states and $a,b$ are 2 nonzero complex numbers with $|a|^2 + |b|^2 = 1$. The phases of $|e\rangle$ and $|e^\perp\rangle$ can always be chosen to have both $a$ and $b$ real and positive.
\label{SymCor}
\end{cor}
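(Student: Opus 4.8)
The plan is to read off the sufficiency direction directly from Corollary~\ref{corollaryNqubits} and then to spend the real effort on necessity, showing that permutation invariance forces all the local bases in the canonical form \eqref{RegPol} to coincide. For sufficiency, note that any state of the form \eqref{RegPolSym} is the special case of \eqref{RegPol} obtained by setting $|e_i\rangle=|e\rangle$ and $|e_i^\perp\rangle=|e^\perp\rangle$ for all $i$, and it is manifestly invariant under every permutation of the qubits. Since $a$ and $b$ are nonzero, Corollary~\ref{corollaryNqubits} immediately guarantees that such a state is entangled and fragile with respect to the loss of any qubit. So the content lies entirely in the converse.

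For necessity, I would start from the output of Corollary~\ref{corollaryNqubits}: a symmetric entangled state fragile with respect to the loss of any qubit can be written as $|\psi_S\rangle = a\,|e_1,\dots,e_N\rangle + b\,|e_1^\perp,\dots,e_N^\perp\rangle$ with $(|e_i\rangle,|e_i^\perp\rangle)$ orthonormal pairs and, by the last sentence of that corollary, $a,b>0$. The goal is to prove that all $|e_i\rangle$ coincide up to a phase and likewise all $|e_i^\perp\rangle$. The first tool is the single-qubit reduced state. Because the two product vectors are orthogonal in every tensor factor, all cross terms drop out of the partial trace and one gets $\rho_k(\psi_S)=a^2|e_k\rangle\langle e_k| + b^2|e_k^\perp\rangle\langle e_k^\perp|$, and permutation invariance forces this operator to be independent of $k$. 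When $a\neq b$ its two eigenvalues are distinct, so its eigenvectors are unique up to a phase; equating the spectra across all $k$ then gives $|e_k\rangle=|e\rangle$ and $|e_k^\perp\rangle=|e^\perp\rangle$ up to phases, which after reabsorbing the phases (as in Corollary~\ref{corollaryNqubits}) is exactly \eqref{RegPolSym}.

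The main obstacle is the balanced case $a=b=1/\sqrt{2}$, where every single-qubit reduced state equals the maximally mixed state $I/2$ and the spectral argument is useless. Here I would instead exploit the \emph{uniqueness} of the decomposition of an $N$-qubit vector as a superposition of two fully orthogonal product vectors. The key observation is that tracing out any single qubit $m$ leaves a rank-two operator whose range is $\mathrm{span}\{|\mathbf{e}_{\neg m}\rangle,|\mathbf{e}_{\neg m}^\perp\rangle\}$, and since on the remaining $N-1\geq 2$ qubits these two product vectors differ in every factor, the only product vectors lying in that range are $|\mathbf{e}_{\neg m}\rangle$ and $|\mathbf{e}_{\neg m}^\perp\rangle$ themselves (any other superposition is a GHZ-type, hence entangled, vector). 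Carrying this out for every $m$ and matching the single-qubit factors—a spectator qubit always remains because $N\geq 3$—shows that the unordered pair $\{|\mathbf{e}\rangle,|\mathbf{e}^\perp\rangle\}$ is uniquely determined by $|\psi_S\rangle$, independently of $a$ and $b$.

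With uniqueness in hand, the symmetry closes the argument. Applying a transposition and using $\hat{P}_{(kl)}|\psi_S\rangle=|\psi_S\rangle$ gives a second decomposition $a\,|\mathbf{e}_{(kl)}\rangle + b\,|\mathbf{e}_{(kl)}^\perp\rangle$ of the same vector, where $|\mathbf{e}_{(kl)}\rangle$ is $|\mathbf{e}\rangle$ with its $k$th and $l$th factors interchanged; this pair is again fully orthogonal, so uniqueness forces $\{|\mathbf{e}_{(kl)}\rangle,|\mathbf{e}_{(kl)}^\perp\rangle\}=\{|\mathbf{e}\rangle,|\mathbf{e}^\perp\rangle\}$. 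The swap can match in only two ways, and this is where $N\geq 3$ does the decisive work a second time: either $|\mathbf{e}_{(kl)}\rangle\propto|\mathbf{e}\rangle$, which forces $|e_k\rangle\propto|e_l\rangle$ and $|e_k^\perp\rangle\propto|e_l^\perp\rangle$, or $|\mathbf{e}_{(kl)}\rangle\propto|\mathbf{e}^\perp\rangle$, which would require $|e_j\rangle\propto|e_j^\perp\rangle$ for every spectator qubit $j\neq k,l$ and contradicts $|e_j^\perp\rangle\perp|e_j\rangle$. Hence only the first alternative survives; running it over all transpositions makes all $|e_k\rangle$ equal and all $|e_k^\perp\rangle$ equal up to phases, and absorbing those phases into $a,b$ and then into $|e\rangle,|e^\perp\rangle$ to make the coefficients real and positive yields precisely \eqref{RegPolSym}, completing the proof.
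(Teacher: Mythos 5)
Your proof is correct, but it takes a genuinely different route from the paper's. The paper handles necessity in essentially one computation, uniform in $a$ and $b$: it projects the symmetry condition $P_{kl}|\psi\rangle = |\psi\rangle$ onto the single product state $|\mathbf{e}_{\neg k,l},e_k,e_l\rangle$; every term involving the orthogonal-complement vectors vanishes thanks to a spectator qubit (this is where $N\geqslant 3$ enters), leaving $a\,|\langle e_k|e_l\rangle|^2 = a$, hence $|e_k\rangle = |e_l\rangle$ up to phase for every pair $k\neq l$, after which the phases are absorbed exactly as you do. Your argument instead works at the level of reduced density operators, and the price is the case split: the spectral argument for $|a|\neq|b|$ is clean but collapses in the balanced case, forcing you to prove a separate uniqueness lemma for decompositions into two fully orthogonal product vectors and then match the transposed decomposition against the original. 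Both of your cases are sound: the cross terms do vanish in the one-qubit reductions; the only product vectors in $\mathrm{span}\{|\mathbf{e}_{\neg m}\rangle,|\mathbf{e}_{\neg m}^\perp\rangle\}$ are indeed the two spanning vectors, since any genuine superposition of them has a rank-2 one-qubit reduction on the remaining $N-1\geqslant 2$ qubits; and the step ruling out the ``swapped'' pairing correctly uses a spectator qubit $j\neq k,l$, again via $N\geqslant 3$. What your route buys is a structural byproduct the paper never states---the unordered pair of product vectors in the canonical form of Corollary~\ref{corollaryNqubits} is an invariant of the state itself; what the paper's route buys is brevity and the absence of any case analysis, since its overlap computation never refers to the spectrum of the reduced states.
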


\begin{proof}
Of course, the states~\eqref{RegPolSym} are symmetric and constitute a subset of states~\eqref{RegPol} that are fragile with respect to the loss of any qubit. Conversely, the only states $|\psi\rangle$ of the form of Eq.~\eqref{RegPol} that are also symmetric are necessarily of the form of Eq.~\eqref{RegPolSym}. Indeed, such symmetric states $|\psi\rangle$ satisfy the equalities $\langle \mathbf{e}_{\neg k,l},e_k,e_l| P_{kl} |\psi\rangle = \langle \mathbf{e}_{\neg k,l},e_k,e_l|\psi\rangle$, $\forall k,l = 1, \ldots, N$, with $k \neq l$, where $P_{kl}$ is the permutation operator between qubits $k$ and $l$. These equalities are equivalent to $|e_k\rangle = |e_l\rangle$, $\forall k,l: k \neq l$. This implies, up to a global phase factor in each single qubit state, $|e_1\rangle = \ldots = |e_N\rangle$ and $|e_1^\perp\rangle = \ldots = |e_N^\perp\rangle$. Setting $|e\rangle = |e_1\rangle$, $|e^\perp\rangle = |e_1^\perp\rangle$, and factoring out each phase factor of the individual qubit states into the coefficients $a$ and $b$ casts $|\psi\rangle$ into the form (\ref{RegPolSym}).
\end{proof}

A geometrical picture linked to the fragility of symmetric states with respect to the loss of any qubit can also be given. In the Majorana representation~\cite{Majorana}, an $N$-qubit symmetric state is univocally associated to a set of $N$ single-qubit states and it can be geometrically represented by the corresponding $N$ points on the Bloch sphere, the so-called Majorana points~\cite{Martin2010}. The set of all symmetric states consists of all possible configurations of the $N$ Majorana points, some of which can be superimposed. For instance, the Majorana representation of the $|\mathrm{GHZ}_N\rangle$ state is a set of $N$ points at the vertices of a regular $N$-sided polygon in the equatorial plane of the Bloch sphere~\cite{Martin2010}. This regular polygonal arrangement of the Majorana points on the Bloch sphere is actually the very general signature of the fragility of a symmetric state with respect to the loss of any qubit:
\begin{theorem}
A symmetric entangled $N$-qubit pure state $|\psi_S\rangle$ is fragile with respect to the loss of any qubit if and only if its Majorana points are at the vertices of a regular $N$-sided polygon in whichever plane intersecting the Bloch sphere.
\end{theorem}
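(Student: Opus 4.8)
The plan is to translate the algebraic characterization of Corollary~\ref{SymCor} into the geometric language of the Majorana representation, exploiting the fact that both fragility and the polygonal structure are invariant under the transformations that will be used. By Corollary~\ref{SymCor}, a symmetric entangled state is fragile with respect to the loss of any qubit if and only if $|\psi_S\rangle = a|e,\dots,e\rangle + b|e^\perp,\dots,e^\perp\rangle$ with $(|e\rangle,|e^\perp\rangle)$ orthonormal and $a,b \neq 0$. Since a single-qubit unitary $U$ applied identically to every qubit maps a symmetric state to a symmetric state, acts on the Majorana points as the corresponding rotation of the Bloch sphere, and carries regular polygons to regular polygons and planes to planes, I would first use this freedom to set $|e\rangle = |0\rangle$ and $|e^\perp\rangle = |1\rangle$, reducing the problem to the generalized GHZ state $|\psi_S\rangle = a|0,\dots,0\rangle + b|1,\dots,1\rangle$.

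Next I would compute the Majorana polynomial of this state. Written in the Dicke basis, only the extreme coefficients $c_0 = a$ and $c_N = b$ are nonzero, so the associated Majorana polynomial reduces (the relevant binomial factors being $\binom{N}{0} = \binom{N}{N} = 1$) to a pure binomial of the form $P(z) \propto a\,z^N + (-1)^N b$, i.e. $z^N = \zeta$ with $\zeta = (-1)^{N+1} b/a$ up to the sign fixed by the chosen stereographic convention. Its $N$ roots $z_j = |\zeta|^{1/N} e^{i(\beta + 2\pi j/N)}$, $j = 0,\dots,N-1$, all share the modulus $r = |b/a|^{1/N}$ and are equally spaced in argument, so in the stereographic plane they are the vertices of a regular $N$-gon inscribed in the circle $|z| = r$ centered at the origin.

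The geometric step is to lift this picture back to the sphere. The inverse stereographic projection maps the origin-centered circle $|z| = r$ to a circle lying in a plane orthogonal to the polar axis and, because the projection commutes with rotations about that axis, sends points differing in argument by $2\pi/N$ to points related by a rotation of $2\pi/N$ about the axis; the $N$ equally spaced roots thus map to $N$ equally spaced points on this latitude circle, i.e. the vertices of a genuine regular $N$-gon on the sphere. Undoing $U$ rotates this polygon and its plane rigidly, and as $|b/a|$ ranges over $(0,\infty)$ the plane sweeps through all heights along the axis while, as $U$ varies, the axis points in every direction, so every plane cutting the sphere in a circle is attained. For the converse I would run this backwards: any regular $N$-gon in an arbitrary secant plane can be brought to a horizontal position by some $U$, its vertices then project to equally spaced points on a circle $|z| = r$, the Majorana polynomial becomes the binomial $z^N - r^N e^{iN\beta}$ whose only nonvanishing Dicke coefficients are $c_0$ and $c_N$, and the state is therefore of the form~\eqref{RegPolSym} and hence fragile by Corollary~\ref{SymCor}.

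The main obstacle is the geometric equivalence between ``equally spaced in the stereographic argument'' and ``vertices of a regular polygon on the sphere'': one must check carefully that a circle centered at the origin of the stereographic plane pulls back to a circle in a plane perpendicular to the polar axis and that equal angular spacing is preserved, which rests precisely on the rotational symmetry of the projection about that axis. Keeping the stereographic and Majorana-polynomial sign and normalization conventions consistent, so that the binomial form emerges cleanly, is the only other point requiring care.
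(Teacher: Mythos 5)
Your proof is correct, and while it shares its skeleton with the paper's, it fills in with an explicit computation the one step the paper outsources to a reference. Both arguments rest on Corollary~\ref{SymCor} and on the freedom to apply a symmetric LU operation $U^{\otimes N}$, which acts on the Majorana points as a rigid rotation of the Bloch sphere. The difference is how the geometric core --- the equivalence between the generalized GHZ form $a|0,\dots,0\rangle + b|1,\dots,1\rangle$ (with $a,b \neq 0$) and a regular $N$-gon of Majorana points --- gets established. The paper simply cites Ref.~\cite{Baguette2015}, where point-group symmetries of Majorana constellations are classified via constraints on Dicke-basis coefficients: a symmetric state has $N$ distinct Majorana points with regular $N$-gon symmetry if and only if it is symmetric-LU-equivalent to such a generalized GHZ state. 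You instead prove this equivalence from scratch: the Majorana polynomial of the generalized GHZ state is the binomial $a z^N + (-1)^N b$, whose roots are equally spaced on an origin-centered circle in the stereographic plane, and the rotational symmetry of the stereographic projection about the polar axis lifts these to equally spaced points on a latitude circle, i.e., a genuine regular $N$-gon on the sphere; conversely, any regular $N$-gon in a secant plane can be rotated to horizontal position, which forces the Majorana polynomial to be a binomial, hence all intermediate Dicke coefficients to vanish, hence the state to be of the form~\eqref{RegPolSym} and fragile by Corollary~\ref{SymCor}. Your route buys self-containedness (no appeal to the external classification) and, in passing, the explicit observation that every plane cutting the sphere is actually attained as $|b/a|$ and $U$ vary; the paper's route buys brevity. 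The technical points you flag --- covariance of the Majorana points under $U^{\otimes N}$, preservation of circles and angular spacing by the stereographic projection, and the nonvanishing of both extreme coefficients ($r>0$ and finite, so $c_0, c_N \neq 0$) --- are exactly the ones that need care, and your treatment of them is sound.
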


\begin{proof}
The point-group symmetries of the Majorana points of a multiqubit symmetric state induce constraints on the coefficients of the symmetric state in the Dicke basis~\cite{Baguette2015}. In particular, an $N$-qubit symmetric state has $N$ distinct Majorana points exhibiting the symmetries of a regular $N$-sided polygon if and only if it is equivalent through a symmetric LU operation to a state of the form
\begin{equation}
|\phi_S\rangle = a^\prime |0,\dots,0\rangle + b^\prime |1,\dots,1\rangle,
\label{prop1proof}
\end{equation}
where $a^\prime$ and $b^\prime$ are two nonzero complex numbers with $|a^\prime|^2 + |b^\prime|^2 = 1$~\cite{Baguette2015}. As the single-qubit states $|0\rangle$ and $|1\rangle$ are orthonormal, a unitary operation always transforms them into a pair of orthonormal states. As a consequence, a symmetric LU operation always transforms a state of the form~\eqref{prop1proof} into a state of the form~\eqref{RegPolSym} and Corollary~\ref{SymCor} concludes the proof.
\end{proof}

\subsection{Fragility with respect to the loss of several qubits}

Theorem 2 shows that fragile symmetric states with respect to the loss of any single qubit all belong to the same SLOCC class. This is, however, not true when considering fragility with respect to the loss of at least two qubits. We illustrate this with symmetric states of four qubits. Symmetric multiqubit states can be classified into entanglement families based on their Majorana representation~\cite{Bastin_families}. In this classification, symmetric 4-qubit states with four distinct points on the Bloch sphere belong to the family $\mathcal{D}_{1,1,1,1}$, which contains an infinite number of different SLOCC classes~\cite{Bastin_families}. Each SLOCC class of this family is unambiguously represented by the state~\cite{Baguette}
\begin{equation}
|\psi_\mu \rangle = \frac{1}{\sqrt{2 + |\mu|^2}} \Big( |D_4^{(0)} \rangle + \mu |D_4^{(2)} \rangle + |D_4^{(4)} \rangle \Big),
\end{equation}
with $\mu$ a complex number that belongs to the set
\begin{multline}
S=\{ \mu \in \mathbb{C}: \mathrm{Re}(\mu) \geqslant 0, \; \mu < \sqrt{2/3} \textrm{ if } \mathrm{Im}(\mu) = 0 , \\ \mathrm{Im}(\mu) \geqslant 0 \textrm{ and } |\mu - \sqrt{2/3}| < \sqrt{8/3} \textrm{ if } \mu \neq \sqrt{2} \; i\},
\end{multline}
and where $|D_N^{(k)}\rangle$ stands for the $k$-excitation Dicke state~\cite{footnote1}. In their respective SLOCC class, the states $|\psi_\mu \rangle$ are up to LU operations the unique maximally entangled states in the sense that their 1-qubit reduced density operators are maximally mixed~\cite{Baguette}. Their robustness with respect to the loss of one or two qubits can be estimated using an entanglement measure for the residual mixed state. To this aim, the negativity is a good tool. For a mixed state $\rho$, it is defined as~\cite{VidalWerner} $\mathcal{N}(\rho)= (|| \rho^{T_{\mathcal{A}}} ||_1 - 1)/2$, where $|| \rho ||_1=\mathrm{Tr}(\sqrt{\rho \rho^\dagger})$ and $\rho^{T_\mathcal{A}}$ is the partial transpose of $\rho$ with respect to a strict subset $\mathcal{A}$ of the parties. After the loss of $t=1$ or 2 qubits, the separability of the residual 3- or 2-qubit symmetric mixed state $\rho$ is equivalent to $\mathcal{N}(\rho) = 0$~\cite{Augusiak}. For $t=1$, only the state $|\psi_{\mu = 0}\rangle = |\mathrm{GHZ}_4\rangle$ is fragile. For $t=2$, we compute that the negativity of the residual mixed state vanishes on the continuous subset
\begin{equation}
    S \cap \left\{ \mu \in \mathbb{C}: \mathrm{Im}(\mu) \geqslant \sqrt{\left[\sqrt{6}-\mathrm{Re}(\mu)\right] \mathrm{Re}(\mu)} \right\}
\end{equation}
Since the states $|\psi_\mu \rangle$ are SLOCC-inequivalent for different values of $\mu$ in $S$, this shows that the fragility of multiqubit symmetric entangled states for the loss of more than one qubit is not restricted to a single SLOCC class (in contrast to the fragility with respect to the loss of a single qubit).

\subsection{Robustness in the symmetric Dicke state SLOCC classes}

For any $N \geqslant 3$, the 2-qubit reduced density operator of all entangled symmetric Dicke states $|D_N^{(k)}\rangle$ ($k = 1, \ldots, N-1$) are themselves entangled~\cite{WangMolmer}. Since separable states remain obviously separable after particle loss, this implies that all entangled Dicke states $|D_N^{(k)}\rangle$ are robust with respect to the loss of any number $t$ of qubits ($t \leqslant N-2$). We conjecture that this robustness property holds for all symmetric states in the Dicke state SLOCC classes $\mathcal{D}_{N-k,k}$ ($k = 1, \ldots, \lfloor N/2 \rfloor$)~\cite{footnote1}. Indeed, in the Majorana representation the symmetric states that belong to any of these SLOCC classes are of the form~\cite{Bastin_families} $\mathcal{N}_S \sum_\pi |\epsilon, \dots, \epsilon, \epsilon', \dots, \epsilon'\rangle$, where $\mathcal{N}_S$ is a normalization constant, $|\epsilon\rangle$ and $|\epsilon'\rangle$ are two distinct normalized single qubit states, the multiqubit states in the sum contain $N-k$ qubits in state $|\epsilon\rangle$ and $k$ qubits in state $|\epsilon'\rangle$, and $\pi$ denotes all permutations of the qubits leading to different terms in the sum. Using LU operations it is always possible to map $|\epsilon\rangle$ and $|\epsilon'\rangle$ onto $|0\rangle$ and $(u|0\rangle + |1\rangle)/\sqrt{1+u^2}$, respectively, with $u \geqslant 0$. As a consequence, all symmetric states in the $\mathcal{D}_{N-k,k}$ SLOCC classes ($k > 0$) are LU-equivalent to a state of the form
\begin{equation}
|\psi_N^{(k)}(u) \rangle = \sqrt{A_N^{(k)}(u)} \sum_\pi |0\rangle^{\otimes N-k} \otimes ( u \: |0\rangle + |1\rangle )^{\otimes k} ,
\end{equation}
with
\begin{equation}
A_N^{(k)}(u) = \left[\sum_{i=0}^k{ \frac{{k \choose i}^2}{{N \choose i}} u^{2(k-i)} } \right]^{-1}.
\end{equation}

If we trace out all qubits but the first two, the reduced density operator $\rho_{1,2}[\psi_N^{(k)}(u)] \equiv \mathrm{Tr}_{\neg 1,2}\left[|\psi_N^{(k)}(u)\rangle \langle \psi_N^{(k)}(u)|\right]$ reads in the computational basis
\begin{multline}
\rho_{1,2} [\psi_N^{(k)}(u)] = A_N^{(k)}(u) \\
\left( \begin{array}{cccc}
f_N^{(k)}(u,0,0) & f_N^{(k)}(u,1,0) & f_N^{(k)}(u,1,0) & f_N^{(k)}(u,2,0) \\
f_N^{(k)}(u,1,0) & f_N^{(k)}(u,1,1) & f_N^{(k)}(u,1,1) & f_N^{(k)}(u,2,1) \\
f_N^{(k)}(u,1,0) & f_N^{(k)}(u,1,1) & f_N^{(k)}(u,1,1) & f_N^{(k)}(u,2,1) \\
f_N^{(k)}(u,2,0) & f_N^{(k)}(u,2,1) & f_N^{(k)}(u,2,1) & f_N^{(k)}(u,2,2)
\end{array} \right),
\end{multline}
with
\begin{equation}
f_N^{(k)}(u,j,j^\prime) = \frac{1}{ {\scriptstyle {N \choose k}^2}} \sum_{i=0}^k { {\scriptstyle {N-i-j \choose k-i-j} {N-i-j^\prime \choose k-i-j^\prime} {N-2 \choose i} } \; u^{2(k-i)-j-j^\prime} }.
\end{equation}

The robustness with respect to the loss of any number $t$ of qubits ($t \leqslant N-2$) of $|\psi_N^{(k)}(u)\rangle$ and their LU-equivalents is asserted if $\rho_{1,2}[\psi_N^{(k)}(u)]$ is entangled, which happens if the determinant of the partial transpose of $\rho_{1,2} [\psi_N^{(k)}(u)]$ is strictly negative. For $k=1$, this is indeed the case:
\begin{equation}
\det \left\{ \rho_{1,2} [\psi_N^{(1)}(u)]^{T_1} \right\} = - \left[ A_N^{(1)}(u) \right]^4 < 0, \,\,\, \forall u \geqslant 0.
\end{equation}
Hence, all $\mathcal{D}_{N-1,1}$ SLOCC class states are robust with respect to the loss of any number $t$ of qubits ($t \leqslant N-2$).

For $k > 1$, only strong numerical evidences for a similar conclusion can be given. In Fig.~\ref{fixedN}, we see that for a fixed number $N$ of qubits (here set to 12) the determinant of the partial transpose of $\rho_{1,2}[\psi_N^{(k)}(u)]$ gets more and more negative for increasing values of $k$. The behavior for increasing number of qubits and $k = 2$ is shown in Fig.~\ref{fixedk}. In this case, the determinant flattens with increasing number of qubits, coming from the negative side closer to the flat zero curve. This suggests that extrapolating to any value of $N$ and $k$, we would always have strictly negative values for the determinant and that as a consequence all symmetric states in the remaining Dicke state SLOCC classes $\mathcal{D}_{N-k,k}$ ($k = 2, \ldots, \lfloor N/2 \rfloor$) would be robust with respect to the loss of any number $t$ of qubits ($t \leqslant N-2$).

\begin{figure}
\centering
\includegraphics[scale=0.49]{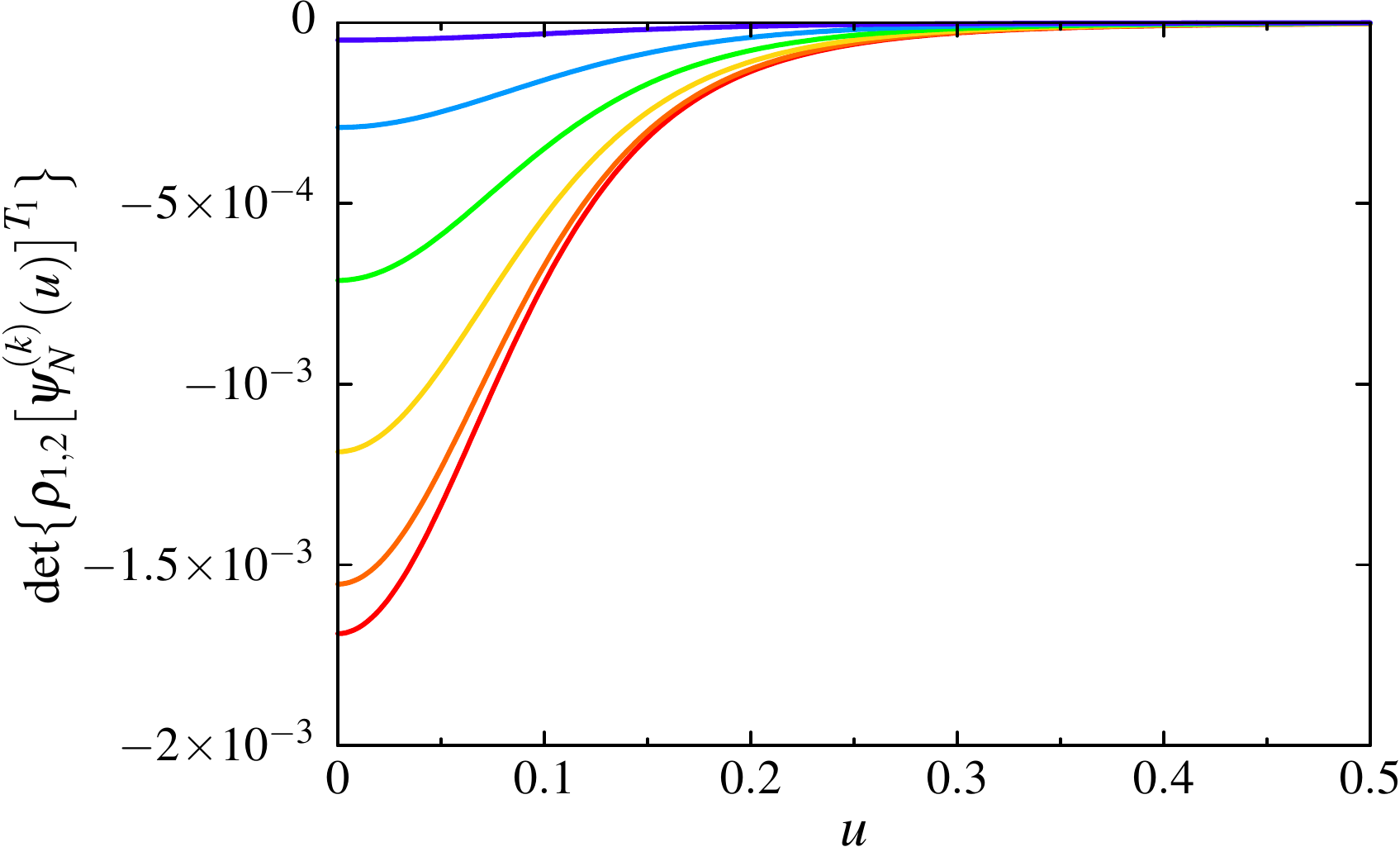}
\caption{Plot of the determinant of the partial transpose of $\rho_{1,2}[\psi_{N}^{(k)}(u)]$ as a function of $u$ for $N=12$ and (from top to bottom) $k=1,2,\ldots,6$.}
\label{fixedN}
\end{figure}

\begin{figure}
\centering
\includegraphics[scale=0.49]{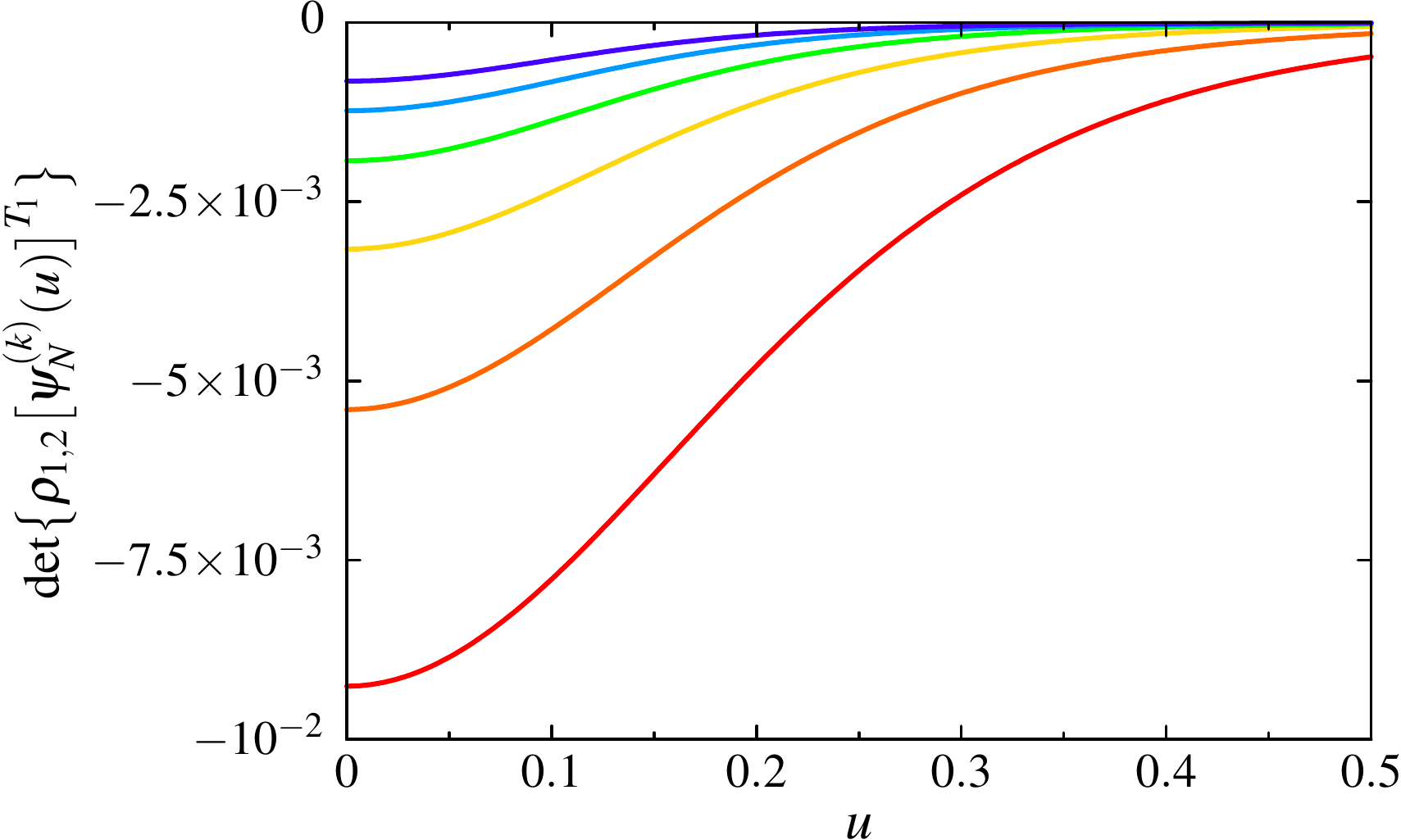}
\caption{Plot of the determinant of the partial transpose of $\rho_{1,2}[\psi_{N}^{(k)}(u)]$ as a function of $u$ for $k=2$ and (from top to bottom) $N=9,8,\ldots,4$.}
\label{fixedk}
\end{figure}

Identifying states that are robust with respect to the loss of an arbitrary number of qubits is particularly interesting as this property is not generic. When translated into the context and notations of this paper, the results of Refs.~\cite{Kendon,KendonZyczkowski,Aubrun} imply that $N$-qubit pure states chosen uniformly at random (according to the natural Fubini-Study measure in the $N$-qubit Hilbert space) are fragile with respect to the loss of any $t$ qubits, with a probability close to 1 (0) for the largest (smallest) values of $t$. There is a transition region between generic robustness and generic fragility, corresponding to the values of $t$ for which both fragile and robust states with respect to the loss of $t$ qubits can be found with nonnegligible probability. In this region of the $t$ parameter space, the probability of being robust was shown to decrease exponentially with $t$ from both numerical simulations~\cite{Kendon,KendonZyczkowski} and theoretical arguments~\cite{Aubrun}, making the transition surprisingly sharp. For $N$-qubit systems, the transition was numerically estimated to lie somewhere between $N/3$ and $N/2$~\cite{KendonZyczkowski}.

\section{Conclusion}
\label{conclusion}

We investigate the robustness of entanglement with respect to particle loss in $N$-qubit systems. We identify exhaustively all entangled states that are fragile with respect to the loss of any single qubit that belongs to any given subset of the qubits. They are given by Eq.~(\ref{canonicForm}). The entangled states that are fragile with respect to the loss of any qubits are shown to all belong to the same SLOCC class, that of the $|\mathrm{GHZ}_N\rangle$ state. In the symmetric subspace, the Majorana points on the Bloch sphere of these states are at the vertices of a regular $N$-sided polygon. This provides an interesting geometrical interpretation of the fragility of symmetric states with respect to the loss of one qubit.

We also investigate states that are fragile with respect to the loss of more than one qubit. Our results are significantly different as fragile states were identified in a continuous set of different SLOCC classes. In contrast, we finally show that all symmetric states of the $\mathcal{D}_{N-1,1}$ SLOCC class (that includes the Dicke state $|D_N^{(1)}\rangle$) are robust with respect to the loss of any number $t$ of qubits ($t \leqslant N-2$). We conjecture based on strong numerical evidence that the same holds for all symmetric states in all other Dicke state SLOCC classes $\mathcal{D}_{N-k,k}$ for $k = 2, \ldots, \lfloor N/2 \rfloor$.

\acknowledgments
A.N. acknowledges a FRIA grant and the Belgian F.R.S.-FNRS for financial support. T.B. acknowledges financial support from the Belgian F.R.S.-FNRS through IISN Grant No.\ 4.4512.08.

\providecommand{\bysame}{\leavevmode\hbox to3em{\hrulefill}\thinspace}
\providecommand{\MR}{\relax\ifhmode\unskip\space\fi MR }
\providecommand{\MRhref}[2]{%
  \href{http://www.ams.org/mathscinet-getitem?mr=#1}{#2}
}
\providecommand{\href}[2]{#2}

\end{document}